\documentclass[12pt,english]{article}
\usepackage[T1]{fontenc}
\usepackage[latin9]{inputenc}
\synctex=-1
\usepackage{color}
\usepackage{babel}
\usepackage{units}
\usepackage{amsmath}
\usepackage{amsthm}
\usepackage{amssymb}
\usepackage{graphicx}
\usepackage{graphics}
\usepackage[unicode=true,pdfusetitle,
 bookmarks=true,bookmarksnumbered=false,bookmarksopen=false,
 breaklinks=false,pdfborder={0 0 1},backref=page,colorlinks=true]
 {hyperref}
\hypersetup{
 linkcolor=blue, citecolor=blue, urlcolor=blue, filecolor=blue}

\makeatletter

\providecommand{\tabularnewline}{\\}

\theoremstyle{plain}
\newtheorem{thm}{\protect\theoremname}[section]
\theoremstyle{plain}
\newtheorem{prop}[thm]{\protect\propositionname}
\theoremstyle{definition}
\newtheorem{defn}[thm]{\protect\definitionname}
\theoremstyle{remark}
\newtheorem{rem}[thm]{\protect\remarkname}
\theoremstyle{plain}
\newtheorem{lem}[thm]{\protect\lemmaname}
\ifx\proof\undefined
\newenvironment{proof}[1][\protect\proofname]{\par
	\normalfont\topsep6\p@\@plus6\p@\relax
	\trivlist
	\itemindent\parindent
	\item[\hskip\labelsep\scshape #1]\ignorespaces
}{%
	\endtrivlist\@endpefalse
}
\providecommand{\proofname}{Proof}
\fi
\theoremstyle{definition}
\newtheorem{example}[thm]{\protect\examplename}

\usepackage{datetime,color,currfile,pdfsync,graphics,lineno,array}

\makeatother

\providecommand{\definitionname}{Definition}
\providecommand{\examplename}{Example}
\providecommand{\lemmaname}{Lemma}
\providecommand{\propositionname}{Proposition}
\providecommand{\remarkname}{Remark}
\providecommand{\theoremname}{Theorem}

\begin{document}

\title{Energy Functions in Polymer Dynamics in terms of Generalized Grey
Brownian Motion}

\author{\textbf{Wolfgang Bock}\\
Technomathematics Group\\
University of Kaiserslautern\\
P.\ O.\ Box 3049, 67653 Kaiserslautern, Germany\\
Email: bock@mathematik.uni-kl.de\and\textbf{ José Luís da Silva}\\
 CIMA, University of Madeira, Campus da Penteada,\\
 9020-105 Funchal, Portugal.\\
 Email: luis@uma.pt\and \textbf{Ludwig Streit}\\
BiBoS, Universität Bielefeld, Germany,\\
CIMA, Unversidade da Madeira, Funchal, Portugal\\
 Email: streit@uma.pt}
\maketitle
\begin{abstract}
In this paper we investigate the energy functions for a class of non
Gaussian processes. These processes are characterized in terms of
the Mittag-Leffler function. We obtain closed analytic form for the
energy function, in particular we recover the Brownian and fractional
Brownian motion cases.
\end{abstract}
\tableofcontents{}

\section{Introduction}

Polymers are consisting of small chemical units which act on each
other via different forces. A very simple and well studied model of
a homo-polymer, i.e.\ a polymer consisting of the same microunits
is a classical random walk. In that case it is known, that the nearest
neighbours are linked via springs, i.e. the chain can be considered
as a chain harmonic of oscillators. 

Of course to obtain a more realistic polymer model the suppression
of such walk had to be introduced (``excluded volume''), see \cite{Edwards-65},
\cite{Westwater1980} for a continuum model and for random walks,
see \cite{DOMB1972} and references therein. Individual chain polymer
models are hence well studied and widely understood. A continuum limit
of those models, i.e.\ where the polymers are modeled by Brownian
motion (Bm) paths, led to a deeper understanding in the asymptotic
scaling behavior of the chains. The draw-back of Brownian or random
walk models is that they can not reflect long-range forces along the
chain without introducing a further potential.

Fractional Brownian motion (fBm) paths have been suggested as a heuristic
model \cite{Biswas}, without yet including the ``excluded volume
effect'' although a more proper model would be based on self-avoiding
fractional random walks. 

The aim of this paper is first to understand the long-range correlations
of fBm as a generalized bead-spring model, hence a chain model. For
this we fist consider a continuous model, which we then discretize.
The aim of this paper is to go beyond the fBm based models. Hence
we discuss energy functions that arise from non-Gaussian chain models,
where the interaction potentials are not only long-range along the
chain but can also give rise to multi-particle non-linear forces between
the constituents. 

The class of underlying random processes is that of generalized grey
Brownian motion. 

In Section \ref{sec:ggBm} we shall introduce the required concepts
and properties of ggBm, so as to then present our results and in Section\ \ref{sec:energy-function}
the explicit energy functions are derived.

\section{Generalized Grey Brownian Motion in Arbitrary Dimensions}

\label{sec:ggBm}

\subsection{Construction of the Mittag-Leffler Measure }

\label{subsec:Prerequisites}Let $d\in\mathbb{N}$ and $L_{d}^{2}$
be the Hilbert space of vector-valued square integrable measurable
functions
\[
L_{d}^{2}:=L^{2}(\mathbb{R})\otimes\mathbb{R}^{d}.
\]
The space $L_{d}^{2}$ is unitary isomorphic to a direct sum of $d$
identical copies of $L^{2}:=L^{2}(\mathbb{R})$\textendash the space
of real-valued square integrable measurable functions wrt Lebesgue
measure. Any element $f\in L_{d}^{2}$ may be written in the form
\begin{equation}
f=(f_{1}\otimes e_{1},\ldots,f_{d}\otimes e_{d}),\label{eq:L2d_element}
\end{equation}
where $f_{i}\in L^{2}(\mathbb{R})$, $i=1,\ldots,d$ and $\{e_{1},\ldots,e_{d}\}$
denotes the canonical basis of $\mathbb{R}^{d}$. The norm of $f$
is given by
\[
|f|_{0}^{2}:=\sum_{k=1}^{d}|f_{k}|_{L^{2}}^{2}=\sum_{k=1}^{d}\int_{\mathbb{R}}f_{k}^{2}(x)\,dx.
\]

As a densely imbedded nuclear space in $L_{d}^{2}$ we choose $S_{d}:=S(\mathbb{R})\otimes\mathbb{R}^{d}$,
where $S(\mathbb{R})$ is the Schwartz test function space. An element
$\varphi\in S_{d}$ has the form
\begin{equation}
\varphi=(\varphi_{1}\otimes e_{1},\ldots,\varphi_{d}\otimes e_{d}),\label{eq:test_function.}
\end{equation}
where $\varphi_{i}\in S(\mathbb{R})$, $i=1,\ldots,d$. Together with
the dual space $S'_{d}:=S'(\mathbb{R})\otimes\mathbb{R}^{d}$ we obtain
the basic nuclear triple
\[
S_{d}\subset L_{d}^{2}\subset S'_{d}.
\]
The dual pairing between $S'_{d}$ and $S_{d}$ is given as an extension
of the scalar product in $L_{d}^{2}$ by 
\[
\langle w,\varphi\rangle_{0}=\sum_{k=1}^{d}\langle w_{k},\varphi_{k}\rangle,
\]
for any $w=(w_{1}\otimes e_{1},\ldots,w_{d}\otimes e_{d})\in S'_{d}$
with $w_{i}\in S'(\mathbb{R})$, $i=1,\ldots,d$ and $\varphi$ as
in (\ref{eq:test_function.}).

Let $0<\alpha<2$ be given and define the operator $M_{-}^{\nicefrac{\alpha}{2}}$
on $S(\mathbb{R})$ by
\[
M_{-}^{\nicefrac{\alpha}{2}}\varphi:=\begin{cases}
K_{\nicefrac{\alpha}{2}}D_{-}^{-\nicefrac{(\alpha-1)}{2}}\varphi, & \alpha\in(0,1),\\
\varphi, & \alpha=1,\\
K_{\nicefrac{\alpha}{2}}I_{-}^{\nicefrac{(\alpha-1)}{2}}\varphi, & \alpha\in(1,2),
\end{cases}
\]
where the normalization constant $K_{\nicefrac{\alpha}{2}}:=\sqrt{\alpha\sin(\nicefrac{\alpha\pi}{2})\Gamma(\alpha)}$
and $D_{-}^{r}$, $I_{-}^{r}$ denote the left-side fractional derivative
and fractional integral of order $r$ in the sense of Riemann-Liouville,
respectively. We refer to \cite{SKM1993} or \cite{KST2006} for the
details and proofs of fractional calculus. It is possible to obtain
a larger domain of the operator $M_{-}^{\nicefrac{\alpha}{2}}$ in
order to include the indicator function $1\!\!1_{[0,t)}$ such that
$M_{-}^{\nicefrac{\alpha}{2}}1\!\!1_{[0,t)}\in L^{2}$, for the details
we refer to Appendix A in \cite{GJ15}. We have the following 
\begin{prop}[Corollary 3.5 in \cite{GJ15}]
For all $t,s\ge0$ and all $0<\alpha<2$ it holds that
\begin{equation}
\big(M_{-}^{\nicefrac{\alpha}{2}}1\!\!1_{[0,t)},M_{-}^{\nicefrac{\alpha}{2}}1\!\!1_{[0,s)}\big)_{L^{2}}=\frac{1}{2}\big(t^{\alpha}+s^{\alpha}-|t-s|^{\alpha}\big).\label{eq:alpha-inner-prod}
\end{equation}
\end{prop}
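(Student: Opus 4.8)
The plan is to pass to the Fourier side and exploit the fact that each one-sided operator $M_-^{\alpha/2}$ acts as a Fourier multiplier whose modulus depends only on $|\xi|$. Since $M_-^{1/2}$ is the identity, the case $\alpha=1$ is immediate: $(1\!\!1_{[0,t)},1\!\!1_{[0,s)})_{L^2}=\min(t,s)=\tfrac12(t+s-|t-s|)$, which matches the right-hand side. For $\alpha\ne1$ I would invoke Plancherel's theorem, writing $\hat f(\xi)=\int_{\mathbb R}f(x)e^{-i\xi x}\,dx$ so that $(u,v)_{L^2}=\frac{1}{2\pi}(\hat u,\hat v)_{L^2}$; the entire computation then reduces to identifying the symbol of $M_-^{\alpha/2}$ and evaluating a single scalar integral.

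First I would record the Fourier symbols of the Riemann--Liouville operators. The left-sided fractional integral $I_-^{r}$ and derivative $D_-^{r}$ act in Fourier space as multiplication by factors of the form $(i\xi)^{-r}$ and $(i\xi)^{r}$ (with the appropriate branch), so their symbols have modulus $|\xi|^{\mp r}$. Tracking the two regimes of the definition,
\[
\alpha\in(0,1):\ M_-^{\alpha/2}=K_{\alpha/2}D_-^{(1-\alpha)/2},\qquad \alpha\in(1,2):\ M_-^{\alpha/2}=K_{\alpha/2}I_-^{(\alpha-1)/2},
\]
in both cases the modulus of the symbol equals $K_{\alpha/2}|\xi|^{(1-\alpha)/2}$, so the squared modulus is the single expression $K_{\alpha/2}^2|\xi|^{1-\alpha}$. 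This uniformity across the two regimes is what makes one closed formula possible, and establishing it carefully --- including that the phase factors cancel when one forms $|\widehat{M_-^{\alpha/2}u}|^2$ --- is the first substantive step.

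Consequently, by Plancherel,
\[
\big(M_-^{\alpha/2}1\!\!1_{[0,t)},M_-^{\alpha/2}1\!\!1_{[0,s)}\big)_{L^2}=\frac{K_{\alpha/2}^2}{2\pi}\int_{\mathbb R}|\xi|^{1-\alpha}\,\widehat{1\!\!1_{[0,t)}}(\xi)\,\overline{\widehat{1\!\!1_{[0,s)}}(\xi)}\,d\xi .
\]
Using $\widehat{1\!\!1_{[0,t)}}(\xi)=(1-e^{-i\xi t})/(i\xi)$, the integrand becomes $|\xi|^{-1-\alpha}(1-e^{-i\xi t})(1-e^{i\xi s})$; discarding the odd part and regrouping the even part as
\[
[1-\cos(\xi t)]+[1-\cos(\xi s)]-[1-\cos(\xi(t-s))],
\]
I would reduce everything to the standard integral $\int_{\mathbb R}|\xi|^{-1-\alpha}(1-\cos\xi)\,d\xi=\pi/(\Gamma(\alpha+1)\sin(\alpha\pi/2))$, valid for $0<\alpha<2$. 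Each bracket then contributes $t^\alpha$, $s^\alpha$, and $|t-s|^\alpha$ respectively, times this constant.

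Finally I would check that the constants collapse. Since $K_{\alpha/2}^2=\alpha\sin(\alpha\pi/2)\Gamma(\alpha)=\Gamma(\alpha+1)\sin(\alpha\pi/2)$, the prefactor $\frac{K_{\alpha/2}^2}{2\pi}$ multiplied by $\frac{\pi}{\Gamma(\alpha+1)\sin(\alpha\pi/2)}$ equals exactly $\tfrac12$, yielding the claimed identity; this is precisely the purpose of the normalization $K_{\alpha/2}$. The main obstacle I anticipate is not the constant bookkeeping but the analytic justification that $1\!\!1_{[0,t)}$ lies in the extended domain of $M_-^{\alpha/2}$ and that the singular integral $\int_{\mathbb R}|\xi|^{-1-\alpha}(1-\cos(\xi\tau))\,d\xi$ converges at both $\xi=0$ and $\xi=\infty$ (which is exactly what forces $0<\alpha<2$), so that the Plancherel manipulation is legitimate. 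This is where the domain extension from Appendix A of the cited reference is needed.
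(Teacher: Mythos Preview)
Your argument is correct. The Fourier-side computation is the standard route to this identity: the key observation that the symbol of $M_-^{\alpha/2}$ has modulus $K_{\alpha/2}\,|\xi|^{(1-\alpha)/2}$ uniformly across the two regimes, followed by the evaluation of $\int_{\mathbb{R}}|\xi|^{-1-\alpha}(1-\cos(\xi\tau))\,d\xi$ and the cancellation of constants via $K_{\alpha/2}^{2}=\Gamma(\alpha+1)\sin(\alpha\pi/2)$, is exactly how such results are obtained.

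Note, however, that the paper does not supply its own proof of this proposition. It is quoted verbatim as Corollary~3.5 of \cite{GJ15} and used as an input. So there is no in-paper argument to compare against; your proposal effectively reconstructs the proof that the cited reference (and the classical fBm literature on the Mandelbrot--Van~Ness type representation) provides. Your closing caveat about the extended domain of $M_-^{\alpha/2}$ is on point: the paper explicitly defers that issue to Appendix~A of \cite{GJ15}, which is precisely where the legitimacy of applying Plancherel to $M_-^{\alpha/2}1\!\!1_{[0,t)}$ is established.
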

We now introduce two special functions needed later on, namely Mittag-Leffler
and $M$-Wright functions as well as their relations.

The Mittag-Leffler function was introduced by G.\ Mittag-Leffler
in a series of papers \cite{Mittag-Leffler1903,Mittag-Leffler1904,Mittag-Leffler1905}.
\begin{defn}[Mittag-Leffler function]
\begin{enumerate}
\item For $\beta>0$ the Mittag-Leffler function $E_{\beta}$ (MLf for short)
is defined as an entire function by the following series representation
\begin{equation}
E_{\beta}(z):=\sum_{n=0}^{\infty}\frac{z^{n}}{\Gamma(\beta n+1)},\quad z\in\mathbb{C},\label{eq:MLf}
\end{equation}
where $\Gamma$ denotes the Gamma function. Note the relation $E_{1}(z)=e^{z}$
for any $z\in\mathbb{C}$. 
\end{enumerate}
\end{defn}
The Wright function is defined by the following series representation
which converges in the whole complex $z$-plane
\[
W_{\lambda,\mu}(z):=\sum_{n=0}^{\infty}\frac{z^{n}}{n!\Gamma(\lambda n+\mu)},\quad\lambda>-1,\;\mu\in\mathbb{C}.
\]
An important particular case of the Wright function is the so called
$M$-Wright function $M_{\beta}$, $0<\beta\le1$ (in one variable)
defined by 
\[
M_{\beta}(z):=W_{-\beta,1-\beta}(-z)=\sum_{n=0}^{\infty}\frac{(-z)^{n}}{n!\Gamma(-\beta n+1-\beta)}.
\]
For the choice $\beta=\nicefrac{1}{2}$ the corresponding $M$-Wright
function reduces to the Gaussian density
\begin{equation}
M_{\nicefrac{1}{2}}(z)=\frac{1}{\sqrt{\pi}}\exp\left(-\frac{z^{2}}{4}\right).\label{eq:MWright_Gaussian}
\end{equation}

The MLf $E_{\beta}$ and the $M$-Wright are related through the Laplace
transform
\begin{equation}
\int_{0}^{\infty}e^{-s\tau}M_{\beta}(\tau)\,d\tau=E_{\beta}(-s).\label{eq:LaplaceT_MWf}
\end{equation}

The $\mathbb{M}$-Wright function with two variables $\mathbb{M}_{\beta}^{1}$
of order $\beta$ (1-dimension in space) is defined by
\begin{equation}
\mathbb{M}_{\beta}^{1}(x,t):=\mathbb{M}_{\beta}(x,t):=\frac{1}{2}t^{-\beta}M_{\beta}(|x|t^{-\beta}),\quad0<\beta<1,\;x\in\mathbb{R},\;t\in\mathbb{R}^{+}\label{eq:MWf-2variables}
\end{equation}
which is a probability density in $x$ evolving in time $t$ with
self-similarity exponent $\beta$. The following integral representation
for the $\mathbb{M}$-Wright is valid, see \cite{Mainardi:2003vn}.
\begin{equation}
\mathbb{M}_{\nicefrac{\beta}{2}}(x,t)=2\int_{0}^{\infty}\frac{e^{-\nicefrac{x^{2}}{4\tau}}}{\sqrt{4\pi\tau}}t^{-\beta}M_{\beta}(\tau t^{-\beta})\,d\tau,\quad0<\beta\leq1,\;x\in\mathbb{R}.\label{eq:subordination_MWf1}
\end{equation}
This representation is valid a more general form, see \cite[eq.~(6.3)]{Mainardi:2003vn},
but for our purpose it is sufficient in view of its generalization
for $x\in\mathbb{R}^{d}$. In fact, eq.\ (\ref{eq:subordination_MWf1})
may be extended to a general spatial dimension $d$ by the extension
of the Gaussian function, namely
\begin{equation}
\mathbb{M}_{\nicefrac{\beta}{2}}^{d}(x,t):=2\int_{0}^{\infty}\frac{e^{-\frac{1}{4\tau}|x|^{2}}}{(4\pi\tau)^{\nicefrac{d}{2}}}t^{-\beta}M_{\beta}(\tau t^{-\beta})\,d\tau,\quad x\in\mathbb{R}^{d},\;t\ge0,\;0<\beta\le1.\label{eq:MWf_d_dimension}
\end{equation}
The function $\mathbb{M}_{\nicefrac{\beta}{2}}^{d}$ is nothing but
the density of the fundamental solution of a time-fractional diffusion
equation, see \cite{Mentrelli2015}.

The Mittag-Leffler measures $\mu_{\beta}$, $0<\beta\leq1$ are a
family of probability measures on $S'_{d}$ whose characteristic functions
are given by the Mittag-Leffler functions. On $S'_{d}$ we choose
the Borel $\sigma$-algebra $\mathcal{B}$ generated by the cylinder
sets, that is 
\[
\mathcal{F}C_{b}^{\infty}(S'_{d}):=\left\{ f(l_{1},\dots,l_{n})\,|\,n\in\mathbb{N},\,f\in C_{b}^{\infty}(\mathbb{R}^{n}),\,l_{1},\dots,l_{n}\in S'_{d}\right\} ,
\]
where $C_{b}^{\infty}(\mathbb{R}^{n})$ is the space of bounded infinitely
often differentiable functions on $\mathbb{R}^{n}$, where all partial
derivatives are also bounded. Using the Bochner-Minlos theorem, see
\cite{BK88} or \cite{HKPS93}, the following definition makes sense. 
\begin{defn}[{cf.\ \cite[Def.~2.5]{GJ15}}]
For any $\beta\in(0,1]$ the Mittag-Leffler measure is defined as
the unique probability measure $\mu_{\beta}$ on $S'_{d}$ by fixing
its characteristic functional
\begin{equation}
\int_{S'_{d}}e^{i\langle w,\varphi\rangle_{0}}\,d\mu_{\beta}(w)=E_{\beta}\left(-\frac{1}{2}|\varphi|_{0}^{2}\right),\quad\varphi\in S_{d}.\label{eq:ch-fc-gnm}
\end{equation}
\end{defn}
\begin{rem}
\label{rem:grey-noise-measure}
\begin{enumerate}
\item The measure $\mu_{\beta}$ is also called grey noise (reference) measure,
cf.\ \cite{GJRS14} and \cite{GJ15}.
\item The range $0<\beta\leq1$ ensures the complete monotonicity of $E_{\beta}(-x)$,
see Pollard \cite{Pollard48}, i.e., $(-1)^{n}E_{\beta}^{(n)}(-x)\ge0$
for all $x\ge0$ and $n\in\mathbb{N}_{0}:=\{0,1,2,\ldots\}.$ In other
words, this is sufficient to show that 
\[
S_{d}\ni\varphi\mapsto E_{\beta}\left(-\frac{1}{2}|\varphi|_{0}^{2}\right)\in\mathbb{R}
\]
is the characteristic function of a measure in $S'_{d}$. 
\end{enumerate}
\end{rem}
We consider the Hilbert space of complex square integrable measurable
functions defined on $S'_{d}$, 
\[
L^{2}(\mu_{\beta}):=L^{2}(S'_{d},\mathcal{B},\mu_{\beta}),
\]
with scalar product defined by
\[
(\!(F,G)\!)_{L^{2}(\mu_{\beta})}:=\int_{S'_{d}}F(w)\bar{G}(w)\,d\mu_{\beta}(w),\quad F,G\in L^{2}(\mu_{\beta}).
\]
The corresponding norm is denoted by $\|\cdot\|_{L^{2}(\mu_{\beta})}$. 

It follows from (\ref{eq:ch-fc-gnm}) that all moments of $\mu_{\beta}$
exists and we have
\begin{lem}
\label{lem:gnm}For any $\varphi\in S_{d}$ and $n\in\mathbb{N}$
we have
\begin{align*}
\int_{S'_{d}}\langle w,\varphi\rangle_{0}^{2n+1}\,d\mu_{\beta}(w) & =0,\\
\int_{S'_{d}}\langle w,\varphi\rangle_{0}^{2n}\,d\mu_{\beta}(w) & =\frac{(2n)!}{2^{n}\Gamma(\beta n+1)}|\varphi|_{0}^{2n}.
\end{align*}
In particular, $\|\langle\cdot,\varphi\rangle\|_{L^{2}}^{2}=\frac{1}{\Gamma(\beta+1)}|\varphi|_{0}^{2}$
and by polarization for any $\varphi,\psi\in S_{d}$ we obtain
\[
\int_{S'_{d}}\langle w,\varphi\rangle_{0}\langle w,\psi\rangle_{0}\,d\mu_{\beta}(w)=\frac{1}{\Gamma(\beta+1)}\langle\varphi,\psi\rangle_{0}.
\]
\end{lem}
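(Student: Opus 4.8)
The plan is to reduce everything to a one-dimensional computation built directly on the characteristic functional (\ref{eq:ch-fc-gnm}). Fix $\varphi\in S_d$ and, for $\lambda\in\mathbb{R}$, replace $\varphi$ by $\lambda\varphi$ in (\ref{eq:ch-fc-gnm}); since $|\lambda\varphi|_0^2=\lambda^2|\varphi|_0^2$, the function
\[
g(\lambda):=\int_{S'_d}e^{i\lambda\langle w,\varphi\rangle_0}\,d\mu_\beta(w)=E_\beta\Big(-\tfrac{1}{2}\lambda^2|\varphi|_0^2\Big)
\]
is exactly the characteristic function of the real random variable $\langle\cdot,\varphi\rangle_0$ under $\mu_\beta$. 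Because $E_\beta$ is entire, $g$ is an entire (and even) function of $\lambda$, which is the key structural input: it guarantees that all moments of $\langle\cdot,\varphi\rangle_0$ are finite and that they may be recovered from the Taylor coefficients of $g$ at the origin.

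First I would make the moment--derivative link precise. The existence of every even derivative $g^{(2n)}(0)$ (immediate from entireness) forces $\int_{S'_d}\langle w,\varphi\rangle_0^{2n}\,d\mu_\beta<\infty$ by the standard theorem relating differentiability of a characteristic function at $0$ to the existence of moments; the odd absolute moments are then finite by Cauchy--Schwarz. With all absolute moments finite one may expand the exponential under the integral sign and integrate term by term (dominated convergence, with the partial-sum remainder bounded by a power of $|\langle w,\varphi\rangle_0|$), obtaining
\[
g(\lambda)=\sum_{k=0}^\infty\frac{(i\lambda)^k}{k!}\int_{S'_d}\langle w,\varphi\rangle_0^k\,d\mu_\beta(w).
\]

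Next I would expand the right-hand side using the series (\ref{eq:MLf}) for the MLf,
\[
E_\beta\Big(-\tfrac{1}{2}\lambda^2|\varphi|_0^2\Big)=\sum_{n=0}^\infty\frac{(-1)^n\lambda^{2n}|\varphi|_0^{2n}}{2^n\,\Gamma(\beta n+1)},
\]
and match powers of $\lambda$. Since this series contains only even powers, every coefficient of an odd power $\lambda^{2n+1}$ on the left must vanish, which gives the odd-moment identity (this can equally be seen from the invariance of $\mu_\beta$ under $w\mapsto-w$, itself a consequence of $g$ being even). Comparing the coefficients of $\lambda^{2n}$ and using $i^{2n}=(-1)^n$ to cancel the alternating signs yields precisely $\int_{S'_d}\langle w,\varphi\rangle_0^{2n}\,d\mu_\beta=\frac{(2n)!}{2^n\Gamma(\beta n+1)}|\varphi|_0^{2n}$. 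The case $n=1$ gives $\|\langle\cdot,\varphi\rangle\|_{L^2}^2=\frac{1}{\Gamma(\beta+1)}|\varphi|_0^2$.

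Finally, for the bilinear identity I would polarize: writing $\langle w,\varphi\rangle_0\langle w,\psi\rangle_0=\tfrac14\big(\langle w,\varphi+\psi\rangle_0^2-\langle w,\varphi-\psi\rangle_0^2\big)$, integrating, and applying the $n=1$ formula to each square, the parallelogram relation $|\varphi+\psi|_0^2-|\varphi-\psi|_0^2=4\langle\varphi,\psi\rangle_0$ collapses everything to $\frac{1}{\Gamma(\beta+1)}\langle\varphi,\psi\rangle_0$. The only genuine subtlety, and the step I would write out most carefully, is the justification of term-by-term integration / differentiation under the integral sign; once finiteness of all moments is secured from the entireness of $E_\beta$, the remainder is bookkeeping with Taylor coefficients.
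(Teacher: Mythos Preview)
Your proposal is correct and is precisely the argument the paper has in mind: the lemma is stated immediately after the remark ``It follows from (\ref{eq:ch-fc-gnm}) that all moments of $\mu_{\beta}$ exist'' and no further proof is given, so you are simply supplying the details behind that one-line reference to the characteristic functional. Your handling of the only delicate point---using entireness of $E_\beta$ to secure finiteness of all moments before matching Taylor coefficients---is appropriate and matches the spirit of the paper's claim.
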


\subsection{Generalized Grey Brownian Motion}

\label{subsec:ggBm}For any test function $\varphi\in S_{d}$ we define
the random variable
\[
X^{\beta}(\varphi):S'_{d}\longrightarrow\mathbb{R}^{d},\;w\mapsto X^{\beta}(\varphi,w):=\big(\langle w_{1},\varphi_{1}\rangle,\ldots,\langle w_{d},\varphi_{d}\rangle\big).
\]
The random variable $X^{\beta}(\varphi)$ has the following properties
which are a consequence of Lemma\ \ref{lem:gnm} and the characteristic
function of $\mu_{\beta}$ given in (\ref{eq:ch-fc-gnm}).
\begin{prop}
Let $\varphi,\psi\in S_{d}$, $k\in\mathbb{R}^{d}$ be given. Then
\begin{enumerate}
\item The characteristic function of $X^{\beta}(\varphi)$ is given by 
\begin{equation}
\mathbb{E}\big(e^{i(k,X^{\beta}(\varphi))}\big)=E_{\beta}\left(-\frac{1}{2}\sum_{j=1}^{d}k_{j}^{2}|\varphi_{j}|_{L^{2}}^{2}\right).\label{eq:characteristic-coord-proc}
\end{equation}
\item The characteristic function of the random variable $X^{\beta}(\varphi)-X^{\beta}(\psi)$
is
\begin{equation}
\mathbb{E}\big(e^{i(k,X^{\beta}(\varphi)-X^{\beta}(\psi))}\big)=E_{\beta}\left(-\frac{1}{2}\sum_{i=1}^{d}k_{j}^{2}|\varphi_{j}-\psi_{j}|_{L^{2}}^{2}\right).\label{eq:CF_increment}
\end{equation}
\item The moments of $X^{\beta}(\varphi)$ are given by
\begin{align}
\int_{S'_{d}}\big|X^{\beta}(\varphi,w)\big|^{2n+1}\,d\mu_{\beta}(w) & =0,\nonumber \\
\int_{S'_{d}}\big|X^{\beta}(\varphi,w)\big|^{2n}\,d\mu_{\beta}(w) & =\frac{(2n)!}{2^{n}\Gamma(\beta n+1)}|\varphi|_{0}^{2n}.\label{eq:variance.cood-proc}
\end{align}
\end{enumerate}
\end{prop}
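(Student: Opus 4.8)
The plan is to derive all three parts from the characteristic functional of $\mu_\beta$ in (\ref{eq:ch-fc-gnm}) together with the moment formula in Lemma~\ref{lem:gnm}. First I would compute the characteristic function in part (1). Since $X^\beta(\varphi,w)=(\langle w_1,\varphi_1\rangle,\ldots,\langle w_d,\varphi_d\rangle)$, the inner product $(k,X^\beta(\varphi))=\sum_{j=1}^d k_j\langle w_j,\varphi_j\rangle$ can be rewritten as $\langle w,k\cdot\varphi\rangle_0$, where $k\cdot\varphi:=(k_1\varphi_1\otimes e_1,\ldots,k_d\varphi_d\otimes e_d)\in S_d$. Plugging this test function into (\ref{eq:ch-fc-gnm}) gives $\mathbb{E}(e^{i(k,X^\beta(\varphi))})=E_\beta(-\tfrac12|k\cdot\varphi|_0^2)$, and since $|k\cdot\varphi|_0^2=\sum_{j=1}^d k_j^2|\varphi_j|_{L^2}^2$ by definition of the norm, this yields (\ref{eq:characteristic-coord-proc}) directly.

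For part (2), the key observation is that $X^\beta$ is linear in its test-function argument, so $X^\beta(\varphi)-X^\beta(\psi)=X^\beta(\varphi-\psi)$ as a random variable (each coordinate is $\langle w_j,\varphi_j-\psi_j\rangle$). Applying part (1) with $\varphi$ replaced by $\varphi-\psi$ immediately gives (\ref{eq:CF_increment}). This is essentially a one-line reduction once linearity is noted.

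For part (3), I would reduce the vector-valued moments to the scalar moments already established in Lemma~\ref{lem:gnm}. One clean route is to specialize part (1): setting $k=t\,\omega$ for a unit vector $\omega\in\mathbb{R}^d$ and expanding both sides as power series in $t$ lets me read off the moments of the scalar random variable $(\omega,X^\beta(\varphi))=\langle w,\omega\cdot\varphi\rangle_0$, which matches Lemma~\ref{lem:gnm} with test function $\omega\cdot\varphi$. However, since the claimed formula involves $|X^\beta(\varphi,w)|^{2n}$, the euclidean norm of the $\mathbb{R}^d$-valued variable rather than a fixed linear functional, I would instead observe that $|X^\beta(\varphi,w)|^2=\sum_{j=1}^d\langle w_j,\varphi_j\rangle^2=\langle w,\varphi\rangle_0^2$. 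Thus $|X^\beta(\varphi,w)|^{2n}=\langle w,\varphi\rangle_0^{2n}$, and the even-moment formula follows verbatim from the second identity in Lemma~\ref{lem:gnm}, while the odd case $|X^\beta(\varphi,w)|^{2n+1}=\langle w,\varphi\rangle_0^{2n+1}\cdot(\text{sign-free})$ needs slightly more care because $|\cdot|^{2n+1}$ is not a polynomial in the $\langle w_j,\varphi_j\rangle$.

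The main obstacle is precisely this last point: the odd absolute moment $\int|X^\beta(\varphi,w)|^{2n+1}\,d\mu_\beta$ is not directly a scalar odd moment of a single functional, so the symmetry argument that kills odd moments in Lemma~\ref{lem:gnm} does not transfer mechanically. The resolution is to invoke the symmetry of $\mu_\beta$ under $w\mapsto -w$, which is evident from (\ref{eq:ch-fc-gnm}) since the characteristic functional depends only on $|\varphi|_0^2$ and is therefore even. Under this reflection $X^\beta(\varphi,w)\mapsto -X^\beta(\varphi,w)$, so $|X^\beta(\varphi,w)|^{2n+1}$ is invariant in magnitude but the odd-degree structure forces the integral to vanish; more carefully, I would note that the \emph{distribution} of $X^\beta(\varphi)$ is radially symmetric in $\mathbb{R}^d$ (again because (\ref{eq:characteristic-coord-proc}) depends on $k$ only through $\sum_j k_j^2|\varphi_j|^2$, which after the change of variables normalizing the covariance is a function of $|k|$ alone), and a radially symmetric distribution has its odd absolute moments determined by an angular integral that I would check reduces to zero only in dimension one — so in general dimension $d$ the correct statement for the odd case may require reinterpretation, and I would flag this as the step needing the most scrutiny before asserting the stated vanishing.
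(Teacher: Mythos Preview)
Your arguments for parts (1) and (2) are correct and are exactly what the paper has in mind: the proposition is stated as a direct consequence of Lemma~\ref{lem:gnm} and the characteristic functional~(\ref{eq:ch-fc-gnm}), and your identification $(k,X^\beta(\varphi))=\langle w,k\cdot\varphi\rangle_0$ together with linearity in the test function is the natural way to make that explicit.

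Part (3), however, contains a genuine error. You write $|X^\beta(\varphi,w)|^2=\sum_{j}\langle w_j,\varphi_j\rangle^2=\langle w,\varphi\rangle_0^2$, but the second equality is false: by definition $\langle w,\varphi\rangle_0=\sum_j\langle w_j,\varphi_j\rangle$ is a scalar, so $\langle w,\varphi\rangle_0^2=\big(\sum_j\langle w_j,\varphi_j\rangle\big)^2$, which differs from $\sum_j\langle w_j,\varphi_j\rangle^2$ by the cross terms. The two quantities have different expectations already for $n=2$: using the subordination $w\stackrel{d}{=}\sqrt{\tau}\,\tilde w$ with Gaussian $\tilde w$, one finds $\mathbb{E}\big[(\sum_j\langle\tilde w_j,\varphi_j\rangle)^4\big]=3|\varphi|_0^4$ whereas $\mathbb{E}\big[(\sum_j\langle\tilde w_j,\varphi_j\rangle^2)^2\big]=3\sum_j|\varphi_j|_{L^2}^4+2\sum_{j<k}|\varphi_j|_{L^2}^2|\varphi_k|_{L^2}^2$, and these disagree for $d\ge2$ unless all but one $\varphi_j$ vanish. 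So the even-moment formula cannot be obtained from Lemma~\ref{lem:gnm} by this identification.

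Your scepticism about the odd case is well placed and in fact diagnoses the underlying issue: for $d\ge2$ the Euclidean length $|X^\beta(\varphi,w)|$ is a nonnegative random variable whose law has no atom at $0$, so its odd moments are strictly positive, not zero. Radial symmetry kills odd \emph{signed} moments of linear functionals, not odd moments of the radius. The consistent reading of part (3)\textemdash and the one the paper evidently intends, since it says the proposition is a consequence of Lemma~\ref{lem:gnm}\textemdash is with the scalar $\langle w,\varphi\rangle_0$ in place of $|X^\beta(\varphi,w)|$; under that reading both identities are literally Lemma~\ref{lem:gnm}. With the Euclidean-norm interpretation the stated formulas fail for $d\ge2$, so your flag is the correct conclusion rather than a gap to be closed.
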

The property (\ref{eq:variance.cood-proc}) of $X^{\beta}(\varphi)$
gives the possibility to extend the definition of $X^{\beta}$ to
any element in $L_{d}^{2}$, in fact, if $f\in L_{d}^{2}$, then there
exists a sequence $(\varphi_{k})_{k=1}^{\infty}\subset S_{d}$ such
that $\varphi_{k}\longrightarrow f$, $k\rightarrow\infty$ in the
norm of $L_{d}^{2}$. Hence, the sequence $\big(X^{\beta}(\varphi_{k})\big)_{k=1}^{\infty}\subset L^{2}(\mu_{\beta})$
forms a Cauchy sequence which converges to an element denoted by $X^{\beta}(f)\in L^{2}(\mu_{\beta})$. 

So, defining $1\!\!1_{[0,t)}\in L_{d}^{2}$, $t\ge0$, by
\[
1\!\!1_{[0,t)}:=(1\!\!1_{[0,t)}\otimes e_{1},\ldots,1\!\!1_{[0,t)}\otimes e_{d})
\]
we may consider the process $X^{\beta}(1\!\!1_{[0,t)})\in L^{2}(\mu_{\beta})$
such that the following definition makes sense.
\begin{defn}
For any $0<\alpha<2$ we define the process 
\begin{align}
S'_{d}\ni w\mapsto B^{\beta,\alpha}(t,w) & :=\big(\langle w,(M_{-}^{\nicefrac{\alpha}{2}}1\!\!1_{[0,t)})\otimes e_{1}\rangle,\ldots,\langle w,(M_{-}^{\nicefrac{\alpha}{2}}1\!\!1_{[0,t)})\otimes e_{d}\rangle\big)\nonumber \\
 & =\big(\langle w_{1},M_{-}^{\nicefrac{\alpha}{2}}1\!\!1_{[0,t)}\rangle,\ldots,\langle w_{d},M_{-}^{\nicefrac{\alpha}{2}}1\!\!1_{[0,t)}\rangle\big),\;t>0\label{eq:ggBm}
\end{align}
as an element in $L^{2}(\mu_{\beta})$ and call this process $d$-dimensional
generalized grey Brownian motion (ggBm for short). Its characteristic
function has the form
\begin{equation}
\mathbb{E}\big(e^{i(k,B^{\beta,\alpha}(t))}\big)=E_{\beta}\left(-\frac{|k|^{2}}{2}t^{\alpha}\right),\;k\in\mathbb{R}^{d}.\label{eq:ch-fc-ggBm}
\end{equation}
\end{defn}
\begin{rem}
\label{rem:existence_ggBm}
\begin{enumerate}
\item The $d$-dimensional ggBm $B^{\beta,\alpha}$ exist as a $L^{2}(\mu_{\beta})$-limit
and hence the map $S'_{d}\ni\omega\mapsto\langle\omega,M_{-}^{\alpha/2}1\!\!1_{[0,t)}\rangle$
yields a version of ggBm, $\mu_{\beta}$-a.s., but not in the pathwise
sense. 
\item For any fixed $0<\alpha<2$ one can show by the Kolmogorov-Centsov
continuity theorem that the paths of the process are $\mu_{\beta}$-a.s.\ continuous,
cf.\ \cite[Prop.~3.8]{GJ15}.
\item Below we mainly deal with expectation of functions of ggBm, therefore
the version of ggBm defined above is sufficient.
\end{enumerate}
\end{rem}
\begin{prop}
\label{prop:characterization_ggBm}
\begin{enumerate}
\item For any $0<\alpha<2$, the process $B^{\beta,\alpha}:=\{B^{\beta,\alpha}(t),\;t\ge0\}$,
is $\nicefrac{\alpha}{2}$-self-similar with stationary increments. 
\item The finite dimensional probability density functions are given for
any $0\leq t_{1}<t_{2}<\ldots<t_{n}<\infty$ by
\[
\rho_{n}^{\beta,\alpha}(x,Q)=\frac{1}{(2\pi)^{\frac{dn}{2}}\big(\det(Q)\big)^{\frac{1}{2}}}\int_{0}^{\infty}\frac{1}{\tau^{\frac{dn}{2}}}e^{-\frac{\|x\|_{Q}^{2}}{4\tau}}M_{\beta}(\tau)\,d\tau,\quad x\in\mathbb{R}^{dn},
\]
where $Q=(a_{ij})$ is the covariance matrix given by 
\[
a_{ij}=\mathbb{E}\big((B^{\beta,\alpha}(t_{i}),B^{\beta,\alpha}(t_{j}))\big)=\frac{d}{2\Gamma(\beta+1)}\left(t_{i}^{\alpha}+t_{j}^{\alpha}-|t_{i}-t_{j}|^{\alpha}\right)
\]
and $\|x\|_{Q}^{2}:=(x,Q^{-1}x)_{\mathbb{R}^{dn}}$.
\end{enumerate}
\end{prop}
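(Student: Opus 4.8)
\emph{Proof idea.} The whole statement is driven by the joint characteristic function of the process, which I would compute first. Fix times $0\le t_1<\dots<t_n$ and vectors $k^{(1)},\dots,k^{(n)}\in\mathbb{R}^d$. Writing $g_{t}:=M_-^{\nicefrac{\alpha}{2}}1\!\!1_{[0,t)}$ and collecting the contribution of each spatial coordinate, one has $\sum_{j=1}^n (k^{(j)},B^{\beta,\alpha}(t_j))=\langle w,\varphi\rangle_0$ with $\varphi=(\varphi_1\otimes e_1,\dots,\varphi_d\otimes e_d)$ and $\varphi_l=\sum_{j=1}^n k_l^{(j)}g_{t_j}$. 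The characteristic functional \eqref{eq:ch-fc-gnm} then gives
\[
\mathbb{E}\Big(\exp\big(i\textstyle\sum_{j=1}^n(k^{(j)},B^{\beta,\alpha}(t_j))\big)\Big)=E_\beta\Big(-\tfrac{1}{2}|\varphi|_0^2\Big),
\]
and by \eqref{eq:alpha-inner-prod} the quadratic form unfolds as
\[
|\varphi|_0^2=\sum_{l=1}^d\sum_{i,j=1}^n k_l^{(i)}k_l^{(j)}\,\tfrac{1}{2}\big(t_i^\alpha+t_j^\alpha-|t_i-t_j|^\alpha\big)=(K,\Gamma_0 K),
\]
where $K=(k^{(1)},\dots,k^{(n)})\in\mathbb{R}^{dn}$ and $\Gamma_0=\Sigma_0\otimes I_d$ with $\Sigma_0=\big(\tfrac{1}{2}(t_i^\alpha+t_j^\alpha-|t_i-t_j|^\alpha)\big)_{i,j=1}^n$. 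This single formula is the engine for both parts.

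\emph{Part (1).} Self-similarity and stationarity of increments I would read off directly from this characteristic function. Replacing each $t_j$ by $at_j$ multiplies every entry of $\Sigma_0$ by $a^\alpha$, which is identical to replacing $K$ by $a^{\nicefrac{\alpha}{2}}K$; hence the finite-dimensional laws of $\{B^{\beta,\alpha}(at)\}$ and $\{a^{\nicefrac{\alpha}{2}}B^{\beta,\alpha}(t)\}$ coincide, i.e. $\nicefrac{\alpha}{2}$-self-similarity. For stationary increments I would pass to the functions $g_t-g_s=M_-^{\nicefrac{\alpha}{2}}1\!\!1_{[s,t)}$ and note, via \eqref{eq:alpha-inner-prod}, that $|g_t-g_s|_{L^2}^2=|t-s|^\alpha$ and, more generally, that the Gram matrix of a family of increments depends only on the lengths and relative positions of the intervals, not on an overall time shift; equality of the corresponding characteristic functions then yields stationarity of the increments.

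\emph{Part (2).} For the densities I would represent $E_\beta$ as the Laplace transform of the $M$-Wright function, \eqref{eq:LaplaceT_MWf}, putting $s=\tfrac{1}{2}(K,\Gamma_0 K)$, so that the characteristic function becomes $\int_0^\infty e^{-\frac{\tau}{2}(K,\Gamma_0 K)}M_\beta(\tau)\,d\tau$. Fourier inversion on $\mathbb{R}^{dn}$ together with Fubini (justified by $M_\beta\ge0$ and $\int_0^\infty M_\beta=1$) turns the inner Gaussian characteristic function into the density of $N(0,\tau\Gamma_0)$, giving
\[
\rho_n^{\beta,\alpha}(x)=\frac{1}{(2\pi)^{\nicefrac{dn}{2}}(\det\Gamma_0)^{\nicefrac{1}{2}}}\int_0^\infty\frac{1}{\tau^{\nicefrac{dn}{2}}}\,e^{-\frac{1}{2\tau}(x,\Gamma_0^{-1}x)}\,M_\beta(\tau)\,d\tau,
\]
which is the asserted form once $Q$ is identified with the covariance matrix in the normalization used there and $\|x\|_Q^2=(x,Q^{-1}x)$ is read off the exponent. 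Invertibility of $Q$ (so that $Q^{-1}$ and $\det Q$ make sense) holds because $\Sigma_0$ is precisely the covariance of fractional Brownian motion with Hurst index $\nicefrac{\alpha}{2}\in(0,1)$, equivalently the Gram matrix of the linearly independent functions $g_{t_1},\dots,g_{t_n}$ via \eqref{eq:alpha-inner-prod}, hence strictly positive definite.

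\emph{Main obstacle.} The conceptual steps are short; the delicate points are the justification of the Fourier--Fubini interchange and, above all, the bookkeeping of constants — reconciling the factor $\tfrac{1}{4}$ in the exponent, the determinant normalization, and the scalar $\tfrac{d}{2\Gamma(\beta+1)}$ appearing in $a_{ij}$ (which encodes both the summation over the $d$ coordinates and the first moment $\int_0^\infty\tau M_\beta(\tau)\,d\tau=\tfrac{1}{\Gamma(\beta+1)}$) — so that the Gaussian-mixture integral lands on the stated matrix $Q$.
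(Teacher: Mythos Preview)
Your argument is sound and is the natural route: compute the joint characteristic function via \eqref{eq:ch-fc-gnm} and \eqref{eq:alpha-inner-prod}, read off self-similarity and stationarity of increments from the resulting quadratic form, and obtain the density as a Gaussian scale mixture through the Laplace representation \eqref{eq:LaplaceT_MWf}. The paper itself does not give a proof here; it merely refers to Mura--Pagnini \cite{Mura_Pagnini_08}, whose argument proceeds along exactly these lines. So there is nothing substantively different to compare.

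One remark on the point you already flag as the ``main obstacle'': the paper's $Q$ is an $n\times n$ matrix with entries $a_{ij}=\mathbb{E}\big((B^{\beta,\alpha}(t_i),B^{\beta,\alpha}(t_j))_{\mathbb{R}^d}\big)$, while $x\in\mathbb{R}^{dn}$, so the expression $(x,Q^{-1}x)_{\mathbb{R}^{dn}}$ implicitly uses $Q\otimes I_d$ (or rather the block structure coming from independence of coordinates). Your $\Gamma_0=\Sigma_0\otimes I_d$ is the honest $dn\times dn$ object; matching it to the stated formula requires absorbing the scalar factor $\frac{d}{\Gamma(\beta+1)}$ relating $\Sigma_0$ to the paper's $Q$ into the $\tau$-integral (which is where the $\frac{1}{4\tau}$ versus your $\frac{1}{2\tau}$ and the determinant power get reconciled). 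This is pure bookkeeping, as you say, but it is worth writing out once because the paper's normalization is not entirely self-consistent on this point.
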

\begin{proof}
The proof can be found in \cite{Mura_Pagnini_08}.
\end{proof}
\begin{rem}
\label{rem:self-similar}The family $\{B^{\beta,\alpha}(t),\;t\ge0,\,\beta\in(0,1],\,\alpha\in(0,2)\}$
forms a class of $\nicefrac{\alpha}{2}$-self-similar processes with
stationary increments ($\nicefrac{\alpha}{2}$-sssi) which includes:
\begin{enumerate}
\item For $\beta=\alpha=1$, the process $\{B^{1,1}(t),\;t\ge0\}$ is a
standard $d$-dimensional Bm.
\item For $\beta=1$ and $0<\alpha<2$, $\{B^{1,\alpha}(t),\;t\ge0\}$ is
a $d$-dimensional fBm with Hurst parameter $\nicefrac{\alpha}{2}$.
\item For $\alpha=1$, $\{B^{\beta,1}(t),\;t\ge0\}$ is a $\nicefrac{1}{2}$-self-similar
non Gaussian process with
\begin{equation}
\mathbb{E}\left(e^{i\left(k,B^{\beta,1}(t)\right)}\right)=E_{\beta}\left(-\frac{|k|^{2}}{2}t\right),\quad k\in\mathbb{R}^{d}.\label{eq:ch-fc-1/2sssi}
\end{equation}
\item For $0<\alpha=\beta<1$, the process $\{B^{\beta}(t):=B^{\beta,\beta}(t),\;t\ge0\}$
is $\nicefrac{\beta}{2}$-self-similar and is called $d$-dimensional
grey Brownian motion (gBm for short). Its characteristic function
is given by 
\begin{equation}
\mathbb{E}\left(e^{i\left(k,B^{\beta}(t)\right)}\right)=E_{\beta}\left(-\frac{|k|^{2}}{2}t^{\beta}\right),\quad k\in\mathbb{R}^{d}.\label{eq:ch-fc-gBm}
\end{equation}
For $d=1$, this process was introduced by W.\ Schneider in \cite{Schneider90,MR1190506}.
\item For other choices of the parameters $\beta$ and $\alpha$ we obtain,
in general, non Gaussian processes.
\end{enumerate}
\end{rem}

\section{Energy Functions}

\label{sec:energy-function}In this section we compute the energy
functions (also called Hamiltonian) associated to the system driven
by a ggBm. At first, we point out the classical case driven by a Bm
which is the sum of harmonic oscillators potentials corresponding
to nearest neighbors attraction. We then compute the energy function
for the general non-Gaussian process $B^{\beta,\alpha}$.

\subsection{Gaussian Case}

Let $X=\{X(t),\;t\ge0\}$ be a standard Gaussian process in $\mathbb{R}^{d}$
with covariance 
\[
\mathbb{E}\big(X_{i}(t)X_{j}(s)\big)=R^{X}(t,s)\delta_{ij}.
\]
Denote the discrete increments of $X$ by
\[
Y(k):=X(k)-X(k-1),\quad k=1,\ldots,N,\;N\in\mathbb{N}.
\]
The density of the $\mathbb{R}^{dN}$-valued random variable $Y=\big(Y(1),\ldots,Y(N)\big)$
may be computed from its characteristic function, namely for any $\lambda=(\lambda_{1},\ldots,\lambda_{N})\in\mathbb{R}^{dN}$
\begin{align*}
C(\lambda):=\mathbb{E}\big(e^{i(Y,\lambda)}\big) & =\mathbb{E}\left(\exp\left(i\sum_{k=1}^{N}\big(Y(k),\lambda_{k}\big)_{\mathbb{R}^{d}}\right)\right)\\
 & =\int_{\mathbb{R}^{dN}}\varrho_{N}^{X}(y)\exp\left(i\sum_{k=1}^{N}\big(y_{k},\lambda_{k}\big)_{\mathbb{R}^{d}}\right)dy.
\end{align*}
If we represent this characteristic function $C$ by
\[
C(\lambda)=\int_{\mathbb{R}^{dN}}e^{-H^{X}(y)}\exp\left(i\sum_{k=1}^{N}\big(y_{k},\lambda_{k}\big)_{\mathbb{R}^{d}}\right)dy,
\]
then by an inverse Fourier transform we obtain
\[
H^{X}(y)=-\ln\big(\varrho^{X}(y)\big),\quad y\in\mathbb{R}^{dN}.
\]
The function $H^{X}$ is called energy function (or Hamiltonian) of
the system. 
\begin{example}
\label{exa:hamiltonian}
\begin{enumerate}
\item If $X$ is the Brownian motion $B$ in $\mathbb{R}^{d}$, putting
$t_{k}=k$, $k=1,\ldots,N$, up to an irrelevant constant, the function
$H^{B}$ is given by
\[
H^{B}(y)=\frac{1}{2}\sum_{k=1}^{N}|y_{k}|^{2}=\frac{1}{2}\sum_{k=1}^{N}|x_{k}-x_{k-1}|^{2},
\]
where $x_{k}$, $k=1,\ldots,N$ denotes the integrated variables. 
\item For the fractional Brownian motion $X=B^{h}$ with Hurst parameter
$h\in(0,1)$ we find up to an irrelevant constant 
\[
H^{B^{h}}(y)=\frac{1}{2}(y,\Sigma^{-1}y)=\sum_{k,n=1}^{N}y_{k}\sigma_{kn}y_{n}=\sum_{k,n=1}^{N}g_{kn}(x_{k}-x_{n})^{2},
\]
where $\Sigma^{-1}=(\sigma_{kn})_{k,n=1}^{N}$ denotes the inverse
of the covariance matrix of the increments $Y^{h}(k)=B^{h}(k)-B^{h}(k-1)$,
$k=1,\ldots,N$.
\end{enumerate}
\end{example}
\begin{rem}
\label{rem:H_Bm-case}
\begin{enumerate}
\item Note that the terms
\[
V=\frac{1}{2}|x_{k}-x_{k-1}|^{2}
\]
are harmonic oscillator potentials, attracting nearest neighbors.
Thus, for Gaussian processes, $H^{X}$ may be calculated $H^{X}=-\ln\varrho^{X}$
through the characteristic function of the process by an inverse Fourier
transform. In addition, $H^{X}$ will be always a quadratic form,
basically the inverse of the covariance matrix $a_{kl}:=\mathbb{E}\big((Y(k),Y(l))_{\mathbb{R}^{d}}\big)$,
$k,l=1,\ldots,N$. 
\item For the fBm case, the difference is that the interaction is not anymore
confined to nearest neighbors. For small Hurst index, this inverted
matrix leads to a long-range attraction along the chain making it
curlier than (discretized) Brownian, while for Hurst indices larger
than $\nicefrac{1}{2}$ there appears a repulsion of next-to-nearest
neighbors, stretching the chain, see \cite{BBS18} and Figure\ \ref{fig:coeff_osc_fBm}
which displays the coupling $g_{kn}$ between between the central
particle and the others along the chain. For $H=0.3$ and $H=0.8$. 
\end{enumerate}
\end{rem}
\begin{figure}
	\begin{centering}
\includegraphics[width=.9\textwidth]{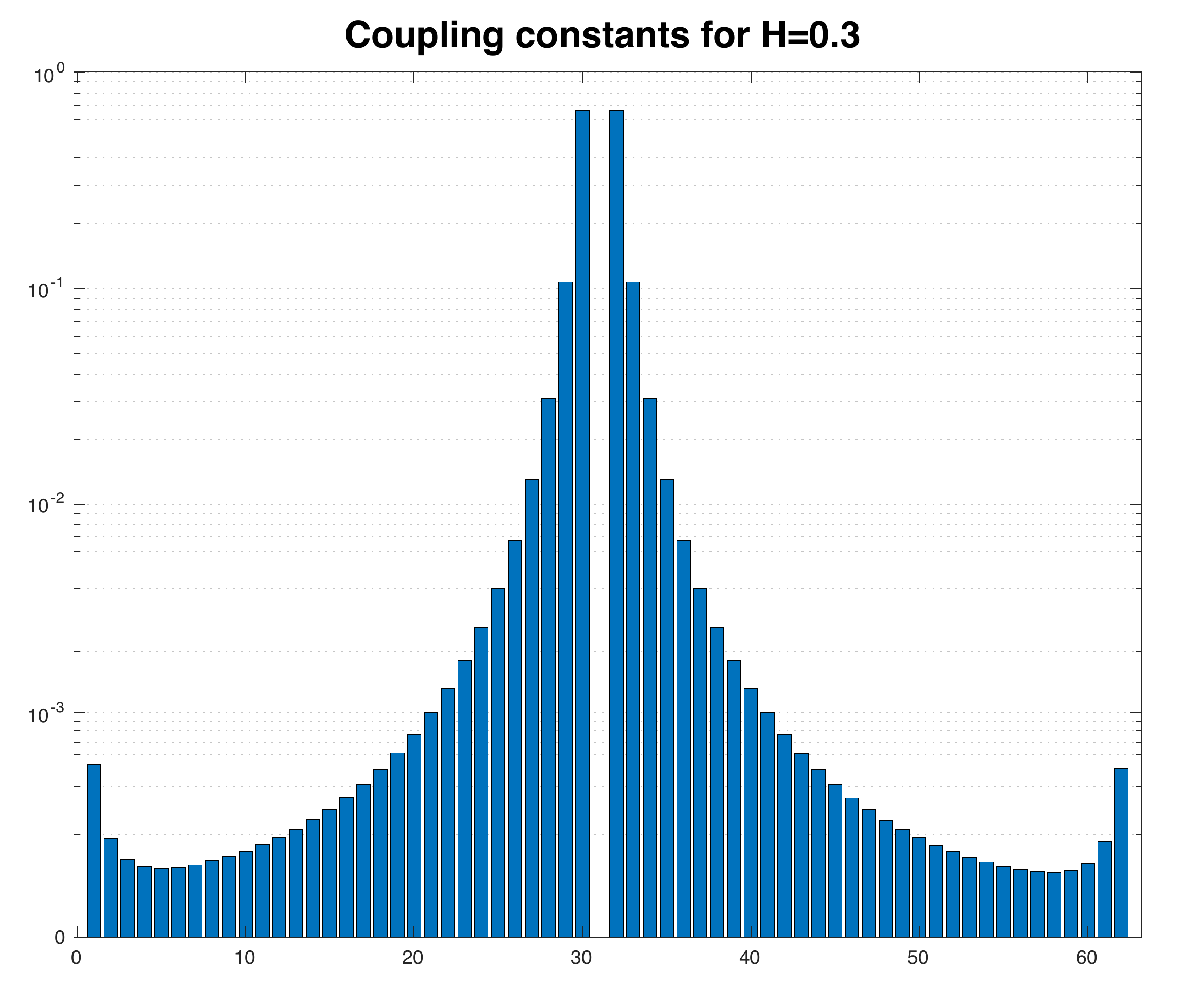}\hfill{}
\includegraphics[width=.9\textwidth]{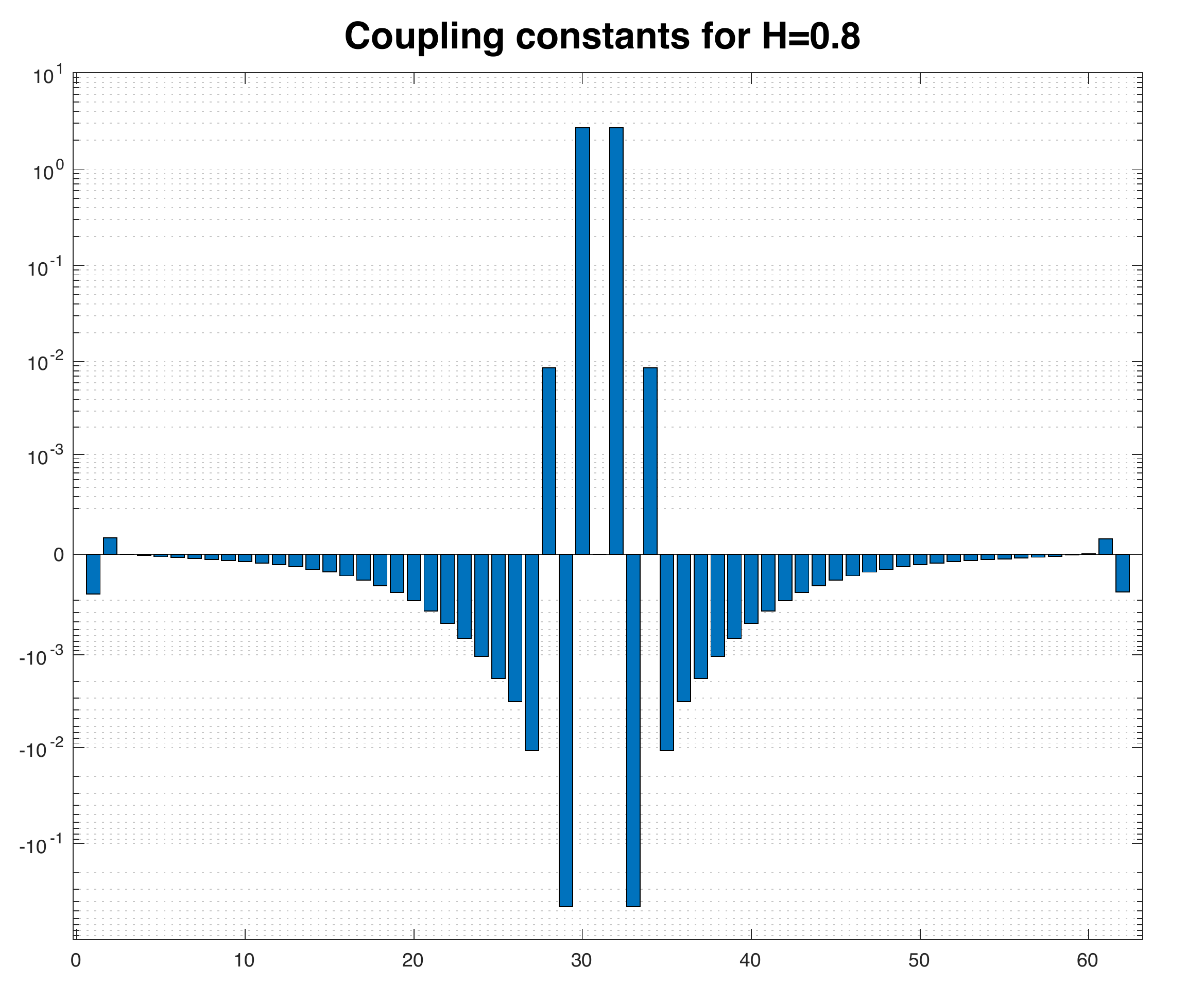} 
\caption{\label{fig:coeff_osc_fBm}Coupling constants for small Hurst index
 and large Hurst index \cite{BBS18}.}
\end{centering}
\end{figure}

\subsection{A non-Gaussian generalization}

In general, for the non Gaussian case, the energy function will no
more be quadratic. To keep things simple let us for the moment just
look at the case of $N=2$. 

\subsubsection{Energy function for $2$ particle interaction}

We look at the increment $Y_{kl}=B^{\beta,\alpha}(k)-B^{\beta,\alpha}(l)$
for $0<l<k<\infty$. The function $H^{\beta,\alpha}(y)=-\ln\big(\varrho^{\beta,\alpha}(y)\big)$
can be computed from the characteristic function of $Y_{kl}$, i.e.,
for any $\lambda\in\mathbb{R}^{d}$ we have 
\begin{align*}
\mathbb{E}\left(e^{i(B^{\beta,\alpha}(k)-B^{\beta,\alpha}(l),\lambda)}\right) & =E_{\beta}\left(-\frac{1}{2}|\lambda|^{2}|k-l|^{\alpha}\right)\\
 & =\int_{\mathbb{R}^{d}}\varrho_{1}^{\beta,\alpha}(y)\exp\big(i(y,\lambda)\big)dy
\end{align*}
through the inverse Fourier transform with $\zeta:=|k-l|^{\alpha}$
\begin{align*}
\varrho_{1}^{\beta,\alpha}(y) & =\frac{1}{(2\pi)^{d}}\int_{\mathbb{R}^{d}}\exp\big(-i(y,\lambda)_{\mathbb{R}^{d}}\big)E_{\beta}\left(-\frac{1}{2}|\lambda|^{2}\zeta\right)d\lambda\\
 & =\frac{1}{(2\pi)^{d}}\int_{0}^{\infty}M_{\beta}(\tau)\int_{\mathbb{R}^{d}}\exp\left(-i(y,\lambda)_{\mathbb{R}^{d}}-\frac{1}{2}|\lambda|^{2}\zeta\tau\right)d\lambda\,d\tau.
\end{align*}
Computing the Gaussian integral and using equality (\ref{eq:MWf_d_dimension})
yields
\begin{align*}
\varrho_{1}^{\beta,\alpha}(y) & =\frac{1}{(2\pi)^{\nicefrac{d}{2}}}\int_{0}^{\infty}\frac{1}{(\tau\zeta)^{\nicefrac{d}{2}}}\exp\left(-\frac{1}{2\tau\zeta}|y|^{2}\right)M_{\beta}(\tau)\,d\tau\\
 & =2^{\nicefrac{d}{2}}\int_{0}^{\infty}\frac{1}{(4\pi\tau)^{\nicefrac{d}{2}}}\exp\left(-\frac{1}{4\tau}|\sqrt{2}y|^{2}\right)\big(\zeta^{\frac{1}{\beta}}\big)^{-\beta}M_{\beta}\big(\tau\big(\zeta^{\frac{1}{\beta}}\big)^{-\beta}\big)\,d\tau\\
 & =2^{\nicefrac{d}{2}-1}\mathbb{M}_{\nicefrac{\beta}{2}}^{d}\left(\sqrt{2}y,\zeta^{\frac{1}{\beta}}\right).
\end{align*}
Therefore, up to a constant the function $H^{\beta,\alpha}$ is given
by
\[
H^{\beta,\alpha}(y)=-\ln\big(\mathbb{M}_{\nicefrac{\beta}{2}}^{d}\big(\sqrt{2}y,\zeta^{\frac{1}{\beta}}\big)\big),\quad y\in\mathbb{R}^{d}.
\]

In Figure \ref{fig:Energy-functions} we plot $H^{\beta,\alpha}$
for different values of $\beta$ and $\alpha$ assuming a time length
$|k-l|=1$. 
\begin{figure}
\begin{centering}
\includegraphics[scale=0.75]{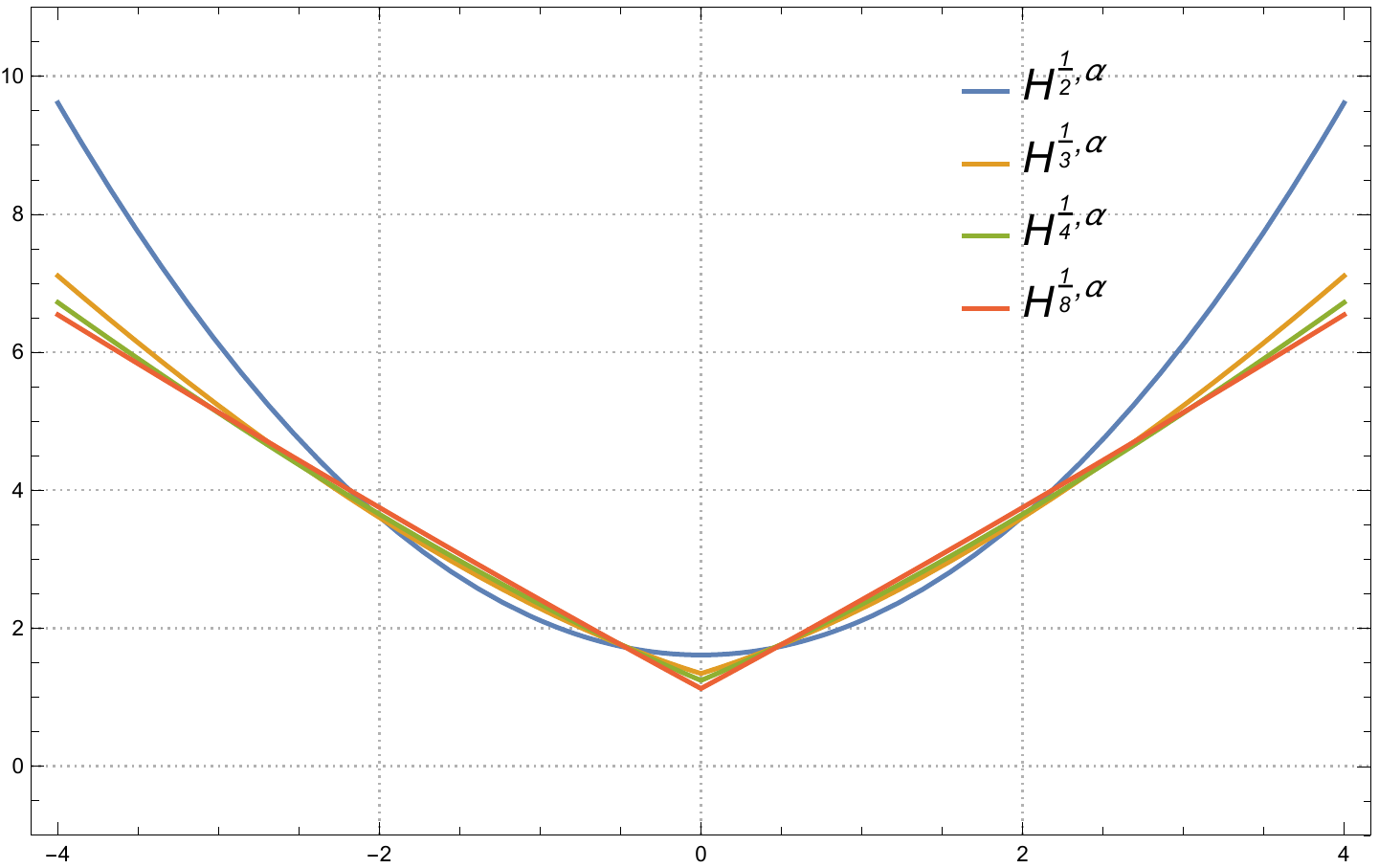} 
\par\end{centering}
\caption{\label{fig:Energy-functions}Energy functions $H^{\beta,\alpha}$
for $d=1$; $|k-l|=1$ and $\beta=1,\frac{2}{3},\frac{1}{2},\frac{1}{4}$.}
\end{figure}

\subsubsection{Energy function for $N+1$ particle integration}

In general, for an arbitrary $N\in\mathbb{N}$, $H^{\beta,\alpha}$
may be computed using the same technique, namely consider the vectors
\[
Y(k):=Y^{\beta,\alpha}(k):=B^{\beta,\alpha}(k)-B^{\beta,\alpha}(k-1),\quad k=1,\ldots,N.
\]
and compute the characteristic function of $Y=\big(Y^{\beta,\alpha}(1),\ldots,Y^{\beta,\alpha}(N)\big)$
for any $\lambda=(\lambda_{1},\ldots,\lambda_{N})\in(\mathbb{R}^{d})^{N}$
\[
\mathbb{E}\left(e^{i(Y,\lambda)_{\mathbb{R}^{dN}}}\right)=E_{\beta}\left(-\frac{1}{2}(\lambda,Q\lambda)_{\mathbb{R}^{dN}}\right),
\]
where $Q:=Q^{\beta,\alpha}=(a_{kn})_{k,n=1}^{N}$ is the covariance
matrix of $Y$ given by
\[
a_{kn}=\mathbb{E}\big((Y(k),Y(n))\big)=\frac{d}{2\Gamma(\beta+1)}\big[|k-1-n|^{\alpha}+|n-1-k|^{\alpha}-2|k-n|^{\alpha}\big].
\]
By Proposition\ (\ref{prop:characterization_ggBm})-2 the density
of $Y$ is given by
\[
\varrho_{N}^{\beta,\alpha}(y)=\frac{1}{(2\pi)^{\nicefrac{dN}{2}}\big(\det Q\big)^{\nicefrac{d}{2}}}\int_{0}^{\infty}\frac{1}{\tau^{\nicefrac{dN}{2}}}\exp\left(-\frac{1}{2\tau}\|y\|_{Q}^{2}\right)M_{\beta}(\tau)\,d\tau.
\]
Hence, the energy function is
\[
H^{\beta,\alpha}(y)=-\ln\left(\varrho_{N}^{\beta,\alpha}(y)\right).
\]

\begin{example}
Let us consider the special case of the 3-particle interaction in
dimension $d=1$. The previous result gives the energy function 
\[
\varrho_{3}^{\beta,\alpha}(y)=\frac{1}{(2\pi)^{d}\big(\det Q\big)^{\nicefrac{d}{2}}}\int_{0}^{\infty}\frac{1}{\tau^{d}}\exp\left(-\frac{1}{2\tau}\|y\|_{Q}^{2}\right)M_{\beta}(\tau)\,d\tau.
\]
For special values of the parameter $\beta$, we may compute in a
closed form the density $\varrho_{3}^{\beta,\alpha}$ as follows. 
\end{example}
\begin{center}
\begin{tabular}{|c|c|c|}
\hline 
 & $M_{\beta}(\tau)$ & $\varrho_{3}^{\beta,\alpha}(y)$\tabularnewline
\hline 
\hline 
$\beta=0$ & $e^{-\tau}$ & ${\displaystyle \frac{1}{\pi\det(Q)^{\nicefrac{1}{2}}}K_{0}\left(\sqrt{2}\|y\|_{Q}\right)}$\tabularnewline
\hline 
$\beta=\frac{1}{3}$ & ${\displaystyle 3^{\nicefrac{2}{3}}\mathrm{Ai}(3^{-\nicefrac{1}{3}}\tau)}$ & ${\displaystyle \frac{1}{8\pi^{3}\det(Q)^{\nicefrac{1}{2}}}G_{5,0}^{0,0}\left(\frac{\|y\|_{Q}^{6}}{5832}\bigg|\;\begin{smallmatrix}\hline \\ \\
0,0,\frac{1}{3},\frac{1}{3},\frac{2}{3}
\end{smallmatrix}\;\right)}$\tabularnewline
\hline 
$\beta=\frac{1}{2}$ & ${\displaystyle \frac{1}{\sqrt{\pi}}e^{-\frac{\tau^{2}}{4}}}$ & ${\displaystyle \frac{1}{(2\pi)^{2}\det(Q)^{\nicefrac{1}{2}}}G_{3,0}^{0,0}\left(\frac{\|y\|_{Q}^{4}}{64}\bigg|\;\begin{smallmatrix}\hline \\ \\
0,0,\frac{1}{2}
\end{smallmatrix}\;\right)}$\tabularnewline
\hline 
$\beta=1$ & $\delta_{1}(\tau)$ & ${\displaystyle \frac{1}{2\pi\det(Q)^{\nicefrac{1}{2}}}\exp\left(-\frac{1}{2}\|y\|_{Q}^{2}\right)}$\tabularnewline
\hline 
\end{tabular}
\par\end{center}

Here $K_{\nu}$, $G_{p,q}^{m,n}$ and $\mathrm{Ai}$ are the BesselK,
Meijer $G$ and Airy functions respectively, see \cite{Olver2010}.

The specific form of the higher order interactions which arise for
$\beta<1$, based on a Taylor expansion of $\ln(\varrho^{\beta,\alpha}(y)$
will be the subject of a separated investigation. 

\subsection*{Acknowledgement}

Financial support from FCT \textendash{} Funda{\c c\~a}o para a Ci{\^e}ncia
e a Tecnologia through the project UID/MAT/04674/2013 (CIMA Universidade
da Madeira) is gratefully acknowledged.

\bibliographystyle{alpha}

\end{document}